\newtheorem{theorem}{Theorem}[section]
\newtheorem{corollary}[theorem]{Corollary}
\theoremstyle{definition}
\newtheorem{definition}[theorem]{Definition}
\newtheorem{example}[theorem]{Example}
\newtheorem{remark}[theorem]{Remark}
\newcommand{\F}{\mathbb{F}}
\newcommand{\C}{\mathcal{C}}
\newcommand{\N}{\mathbb{N}}
\newcommand{\wt}{\mathrm{wt}}
\newcommand{\floor}[1]{{\left\lfloor{#1}\right\rfloor}}
\newcommand{\dfree}{\mathrm{d_{free}}}
\newcommand{\mylabel}[2]{#2\def\@currentlabel{#2}\label{#1}}
\title{On the left primeness of some polynomial matrices with applications to convolutional codes}
\author{Gianira N. Alfarano}
\author{Julia Lieb}
\affil{University of Z{u}rich, Switzerland}
\begin{document}

\maketitle

\begin{abstract}
    Maximum distance profile (MDP) convolutional codes have the property that their column distances are as large as possible for given rate and degree. There exists a well-known criterion to check whether a code is MDP using the generator or the parity-check matrix of the code.
    In this paper, we show that under the assumption that $n-k$ divides $\delta$ or $k$ divides $\delta$, a polynomial matrix that fulfills the MDP criterion is actually always left prime. In particular, when $k$ divides $\delta$, this implies that each MDP convolutional code is noncatastrophic. Moreover, when $n-k$ and $k$ do not divide $\delta$, we show that the MDP criterion is in general not enough to ensure left primeness. In this case, with one more assumption, we still can guarantee the result.
   
 \end{abstract}

\section{Introduction}
In the algebraic theory of error correcting codes, one important family of codes for telecommunication is given by convolutional codes. A convolutional code is a submodule of rank $k$ of $\F[z]^n$, where $\F$ is a finite field. Unfortunately, constructions of these codes, having some good designed minimum distance, are quite rare. Convolutional codes have the flexibility of grouping blocks of information in an appropriate way, according to the erasures location, and then decoding first the part of the sequence with least erasures or the part of the sequence where the distribution of erasures allows a complete correction.

The classical notion of minimum distance for a convolutional code would be the so-called \emph{free distance}. However, for convolutional codes, there is a notion of distance that is more important than the free distance, namely the \emph{$j$-th column distances}, which satisfy a set of upper bounds given in \cite{gl03}. If as many of these column distances as possible meet these bounds with equality, then $\C$ is called \emph{maximum distance profile (MDP)}, see \cite{gl03}. It was shown in \cite{virtu2012} that MDP convolutional codes have optimal recovery rate for windows of a certain length, depending on the code parameters.

As MDP convolutional codes have the maximal possible growth in the column distances, they can correct the maximal number  of errors in a time interval, and therefore are similar to MDS block codes within windows of fixed size.
However, in contrast to the case of MDS block codes, there are very few algebraic constructions of MDP convolutional codes, all based on a characterization provided in \cite{gl03}, which works with the assumption that a parity-check matrix $H(z)$ (or equivalently a generator matrix $G(z)$) of the convolutional code is left prime. In particular, this criterion says that a left prime matrix $H(z)\in\F[z]^{(n-k)\times n}$ is a parity-check matrix  (equivalently a left prime $G(z)\in\F[z]^{k\times n}$ is a generator matrix) of an MDP convolutional code if all the full-size minors of the truncated sliding parity-check matrix $H_j^c$ (equivalently the sliding generator matrix $G_j^c$) that are not trivially zero are nonzero.

Left prime matrices have been widely investigated in module theory, in system theory and also in the theory of convolutional codes. In the literature one can find several properties equivalent to left primeness (see, for instance, Chapter 6 of \cite{kailath1980linear}). However, proving even one of them for a given polynomial matrix is in general not easy.

Several papers that provide a (concrete) construction for MDP convolutional codes, for example \cite{virtu2012}, \cite{AlmeidaNappPinto2013}, \cite{li17}, are based on the characterization for the parity-check matrix mentioned above. Unfortunately, in all of them there is no discussion on the left primeness of the constructed matrices. Indeed, in all the mentioned works, only the criterion on the minors of the sliding parity-check is shown to be satisfied.
In this paper, 
we explain in a remark why the left primeness is not needed in order that this criterion is valid and thus, all of these constructions are correct. However, in general it is not easy to compute the degree of a convolutional code from a parity-check matrix that is not left prime and hence, it is not a priori clear that the constructed codes have really the degree that is stated in these papers.

In this paper, we show that if $H(z)$ is a parity-check matrix (resp. $G(z)$ is a generator matrix) of an $(n,k,\delta)$ convolutional code $\C$, where $n-k$ divides $\delta$ or $k$ divides $\delta$ and such that the criterion on the minors of the truncated sliding parity-check matrix $H_L^c$ (resp. generator matrix $G_L^c$) of $\C$ is satisfied, then $H(z)$ (resp. $G(z)$) is left prime. Observe that if $n-k$ divides $\delta$, we consider the parity-check matrix and if $k$ divides $\delta$, we consider the generator matrix. If $k$ divides $\delta$, our result implies that all $(n,k,\delta)$ MDP convolutional codes are necessarily noncatastrophic. If $n-k$ divides $\delta$, it implies that a polynomial matrix $H(z)$ that fulfills the criterion is a parity-check matrix of a convolutional code whose degree equals the sum of the row degrees of $H(z)$ (and of course is noncatastraphic as it has a parity-check matrix). 

If $n-k$ and $k$ do not divide $\delta$, then we require some technical assumption in addition to the criterion on the minors of the sliding matrices to obtain that the left primeness property is satisfied.

The paper is structured as follows. In Section \ref{sec:preliminaries}, we give some background on convolutional codes, focusing on the family of MDP convolutional codes. Moreover, we point out the importance of the left primeness property for generator or parity-check matrices of noncatastrophic convolutional codes. In Section \ref{sec:mainresult}, we prove the main result of the paper, namely we show that the MDP property on the truncated sliding parity-check matrix $H_L^c$ of a convolutional code, implies that $H(z)$ is left prime. The same result can be shown for the generator matrix of a convolutional code. We conclude with some more remarks in Section \ref{sec:conclusion}.


\section{Preliminaries}\label{sec:preliminaries}
In this section we give the basic notions and results on the theory of convolutional codes. For a more detailed treatment we refer to \cite{lieb2020convolutional}. In particular, we will focus on the property of left primeness of a parity-check (or generator) matrix of a noncatastrophic convolutional code.

Let $\F$ be a finite field, $\F[z]$ be the polynomial ring over $\F$ in the indeterminate $z$ and let $k,n$ be positive integers, such that $k< n$. An $(n,k)$ \emph{convolutional code} $\C$ is a submodule of $\F[z]^n$ of rank $k$. 

Since $\F[z]$ is a principal ideal domain, every submodule of $\F[z]^n$ is free. Hence, there exists a polynomial matrix $G(z)\in \F[z]^{k\times n}$, whose rows are a basis for $\C$. We call such $G(z)$ a \emph{generator matrix} of the convolutional code $\C$. Note that it is not unique. Assume that $G(z)\in\F[z]^{k\times n}$ and $\bar{G}(z)\in\F[z]^{k\times n}$ are two generator matrices for the same convolutional code $\C$, then, there exists a \emph{unimodular} matrix $U(z)\in\F[z]^{k\times k} $ such that $\bar{G}(z)=U(z)G(z)$ and, we say that $G(z)$ and $\bar{G}(z)$ are \emph{equivalent}.

Given an $(n,k)$ convolutional code $\C\subseteq\F[z]^n$, with generator matrix $G(z)\in\F[z]^{k\times n}$, we define the largest degree among the entries in the $i$-th row of $G(z)$ as the \emph{$i$-th row degree} and we denote it by $\delta_i$. We say that $G(z)$ is \emph{row-reduced} or \emph{minimal} if the sum of its row degrees attains the minimum possible value. We define the \emph{degree} $\delta$ of $\C$ to be the highest degree of the $k\times k$ minors of $G(z)$. When the degree $\delta$ of an $(n,k)$ convolutional code $\C\subseteq \F[z]^n$ is known, we denote $\C$ by $(n,k,\delta)$. Note that if $G(z)$ is row-reduced, then $\delta=\delta_1 + \dots +\delta_k$. Moreover, given a generator matrix $G(z)$ of a convolutional code $\C$, there always exists a row-reduced generator matrix equivalent to $G(z)$. 

There is another important property of matrices that is useful in the context of convolutional codes.

\begin{definition}\label{def:leftprime}
Let $G(z)\in \F[z]^{k\times n}$ be a matrix. Then $G(z)$ is said to be \emph{left prime} if in all the factorizations $G(z) = M(z)\bar{G}(z)$, with $M(z)\in\F[z]^{k\times k}$ and $\bar{G}(z)\in\F[z]^{k\times n}$, the left factor $M(z)$ is unimodular, i.e. $M(z)\in \mathrm{GL}_k(\mathbb F[z])$.
\end{definition}

There are several characterizations for left prime matrices. In particular, it holds that $G(z)\in\F[z]^{k\times n}$ is left prime if and only if it admits a right $n\times k$ polynomial inverse (see Chapter 6 of \cite{kailath1980linear} for details).

Since all generator matrices of a convolutional code $\C$ are equivalent up to multiplication by unimodular matrices, if $\C$ admits a left prime generator matrix, then all its generator matrices are left prime. In this case, we  say that $\C$ is a \emph{noncatastrophic} convolutional code.

Let $\C$ be a noncatastrophic $(n,k,\delta)$ convolutional code and  $G(z)\in \F[z]^{k\times n}$ be a generator matrix of $\C$. Then there exists a matrix $H(z)\in\F[z]^{(n-k)\times n}$, such that
\begin{equation}
c(z)\in\C \Leftrightarrow H(z)c(z)^\top=0.
\end{equation} 
Such a matrix $H(z)$ is called a \emph{parity-check matrix} of $\C$. In \cite{york1997algebraic}, it has been shown that a convolutional code $\C$ is noncatastrophic if and only if it admits a parity-check matrix. 

Also a parity-check matrix of a convolutional code is not unique. In fact, every convolutional code has several left prime parity-check matrices  and several parity-check matrices that are not left prime (in contrast to generator matrices where left primeness is a property of a noncatastrophic code). Indeed, if we consider a left prime parity-check matrix for a convolutional code and multiply it from the left with any polynomial matrix, we obtain another parity-check matrix for the same code.
Moreover, in \cite{ro01}, it is shown that if $H(z)\in\F[z]^{(n-k)\times n}$ is a left prime and row-reduced parity-check matrix of an $(n,k,\delta)$ convolutional code $\C$, then the sum of the row degrees of $H(z)$ is equal to $\delta$. This is not true in general. Indeed,
 the following example shows that if  a not left prime parity-check matrix $H(z)$ of a convolutional code $\C$ is given, one can not obtain the degree of $\C$ as sum of the row degrees of $H(z)$.

\begin{example}\label{ex:degreeleftprime}
Let $\C$ be a $(3,1)$ convolutional code with with parity-check matrix $$H(z)=\left[\begin{array}{ccc}z(1+z) & 0 & 1+z\\
    0 & 1+z & 1+z\end{array}\right].$$ 
    Observe that $\C$ has degree $1$ since the matrix $$\tilde{H}(z)=\left[\begin{array}{ccc}z & 0 & 1\\
    0 & 1 & 1\end{array}\right]$$ is a left prime and row-reduced parity-check matrix of the same convolutional code, but the sum of the row degrees of $H(z)$ is $3$. Moreover, the maximal degree of the full-size minors of $H(z)$ is also $3$. This shows that the only way to obtain the degree of the code is by computing an equivalent left prime parity-check matrix. Note also that it does not help that $H(z)$ is row-reduced and that $H(0)$ has full rank.
\end{example}

Let $\C\subseteq\F[z]^n$ be an $(n,k,\delta)$ convolutional code. Thanks to the isomorphism between $\F[z]^n$ and $\F^n[z]$, we can define a weight function on $\C$ as follows. Given a codeword $v(z)=\sum_{i=0}^r v_iz^i\in \C$, we define the \emph{weight} of $v(z)$ as
$$\wt(v(z)) := \sum _{i=0}^r \wt(v_i)\in\N_0,$$
where $\wt(v_i)$ denotes the Hamming weight of $v_i\in\F^n$, i.e. the number of its nonzero components.
Finally, the \emph{free distance} of a convolutional code $\C$ is defined as
$$\dfree(\C):=\min\{\wt(v(z))\mid v(z)\in\C, v(z)\ne 0\}.$$
The generalized Singleton bound for an $(n,k,\delta)$ convolutional code $\C$, derived by Rosenthal and Smarandache in \cite{ro99a1}, relates the parameters of a convolutional code via the following inequality:

\begin{equation}\label{eq:genSingleton}
\dfree(\C)\leq (n-k)\left(\floor{\frac{\delta}{k}}+1\right)+\delta +1.
\end{equation}

A convolutional code whose free distance reaches the bound \eqref{eq:genSingleton} with equality  is called \emph{maximum distance separable (MDS) convolutional code}.


\subsection{MDP convolutional codes}
In this section we briefly define what MDP convolutional codes are and why their study is important. 

 In the context of convolutional codes, one aims to build codes which can correct as many errors as possible within windows of different sizes. This property is described by the notion of column distances. More formally, we introduce the following notation. Let $v(z) = \sum_{i=0}^r v_iz^i\in\F^n[z]$. For any positive integer $j\leq r$, let $v_{[0,j]}(z) := \sum_{i=0}^j v_iz^i$.

\begin{definition}

The \emph{j-th column distance} $d_j^c$ of an $(n,k,\delta)$ convolutional code $\C$ is defined as
$$ d_j^c :=\min\{\wt(v_{[0,j]}(z))\mid v(z)\in\C, \quad v(z)\ne 0\} $$
\end{definition}

Moreover, the column distances of  $\C$ satisfy the following set of bounds.

\begin{theorem}\textnormal{\cite[Proposition 2.2]{gl03}\label{thm:bound_dj}}
For every integer $j\in\N_0$,
\begin{equation}\label{eq:col_dist_bound}
d_j^c\leq(n-k)(j+1)+1.
\end{equation}
\end{theorem}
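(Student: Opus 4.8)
The plan is to prove the bound by a direct Singleton-type counting argument on the truncated data of a codeword: I will construct a single nonzero codeword whose truncation has weight at most $(n-k)(j+1)+1$. Since $d_j^c$ is defined as a minimum over all nonzero codewords, exhibiting one such codeword is enough, and in particular I will not need any lower bound or any information on the rank of the truncated matrix, which lets me avoid the usual subtleties that arise when $G(0)$ fails to have full rank.

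First I would encode the truncation operation linearly. Writing a codeword as $v(z)=u(z)G(z)$ with $u(z)=\sum_{i\ge 0}u_iz^i\in\F[z]^k$ and expanding $G(z)=\sum_{i=0}^{\mu}G_iz^i$ (with $\mu$ the memory), the coefficient $v_i$ with $i\le j$ depends only on $u_0,\dots,u_i$. Hence the truncation $(v_0,\dots,v_j)$ is obtained from $(u_0,\dots,u_j)$ by multiplication with the block-Toeplitz (truncated sliding generator) matrix
\[
G_j^c=\begin{pmatrix}
G_0 & G_1 & \cdots & G_j\\
 & G_0 & \cdots & G_{j-1}\\
 & & \ddots & \vdots\\
 & & & G_0
\end{pmatrix}\in\F^{(j+1)k\times(j+1)n},
\]
with the convention $G_i=0$ for $i>\mu$ and empty entries equal to $0$. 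Thus every $x:=(u_0,\dots,u_j)\in\F^{(j+1)k}$ produces a codeword whose truncation is exactly $x\,G_j^c$.

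Next I would use a dimension count to force many coordinates of this product to vanish. The matrix $G_j^c$ has $(j+1)n$ columns; I single out any $(j+1)k-1$ of them and impose the linear conditions that the corresponding entries of $x\,G_j^c$ be zero. This is a homogeneous system of $(j+1)k-1$ equations in the $(j+1)k$ unknown entries of $x$, so, having strictly fewer equations than unknowns, it admits a nonzero solution $x$. For this $x$ the vector $x\,G_j^c$ vanishes on the $(j+1)k-1$ chosen positions, hence its support is contained in the remaining $(j+1)n-\big((j+1)k-1\big)=(n-k)(j+1)+1$ positions, giving $\wt(x\,G_j^c)\le(n-k)(j+1)+1$.

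Finally I would verify that this truncation genuinely comes from a nonzero codeword. Setting $u(z)=\sum_{i=0}^{j}u_iz^i$, we have $u(z)\neq 0$ because $x\neq 0$; since a generator matrix $G(z)$ has full row rank $k$ over $\F[z]$, it follows that $v(z)=u(z)G(z)\neq 0$, and by construction $v_{[0,j]}(z)$ corresponds to $x\,G_j^c$. Therefore $d_j^c\le\wt(v_{[0,j]}(z))\le(n-k)(j+1)+1$, as claimed. The whole argument is just the Singleton bound transported to the truncated block code generated by $G_j^c$; the only point that requires genuine care is ensuring the selected input yields a \emph{nonzero} codeword, and this is handled cleanly by the full row rank of $G(z)$. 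Consequently I expect no serious obstacle here, the real work being the correct set-up of $G_j^c$ and the bookkeeping of the column count.
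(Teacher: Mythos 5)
The paper gives no proof of this statement (it is quoted from \cite{gl03}), so I judge your argument on its own terms, and it has a genuine gap centered on the condition $v_0\neq 0$. The definition of $d_j^c$ as printed in the paper minimizes over all nonzero codewords, but under that literal reading the quantity is identically zero (take $z^{j+1}w(z)$ for any nonzero codeword $w(z)$) and the theorem is vacuous; the definition actually used in \cite{gl03}, and the one needed for the rest of the paper (e.g. for $d_0^c$ to be the distance of the block code generated by $G_0$, and for the equivalence with condition 2 of Theorem \ref{thm:characterizationMDP}), restricts the minimum to codewords with $v_0\neq 0$. Your construction does not produce such a codeword: the homogeneous system of $(j+1)k-1$ equations yields some nonzero $x=(u_0,\dots,u_j)$, but nothing prevents $u_0=0$, and even if $u_0\neq 0$ you may have $v_0=u_0G_0=0$ when $G_0$ fails to have full row rank. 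Worse, if $G_j^c$ has a nontrivial left kernel, your $x$ may satisfy $xG_j^c=0$ outright, and the argument would then ``prove'' $d_j^c=0$. The set $\{xG_j^c : u_0\neq 0\}$ over which the true minimum is taken is not a subspace, so a bare dimension count cannot be applied to it. This is exactly the subtlety about $G(0)$ that you announce you will avoid; it cannot be avoided.

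The standard repair uses the full rank of $G_0$ (guaranteed for a left prime, or merely delay-free, encoder). Choose column indices $c_1<\dots<c_k$ so that the corresponding $k\times k$ submatrix $A_0$ of $G_0$ is invertible, and let $A_i$ denote the corresponding submatrix of $G_i$. Pick $u_0$ with $u_0A_0=e_1$, so that $u_0\neq 0$ and $v_0\neq 0$, and recursively set $u_i=-(u_0A_i+\dots+u_{i-1}A_1)A_0^{-1}$ for $i=1,\dots,j$; this annihilates the coordinates $c_1,\dots,c_k$ of $v_1,\dots,v_j$ and all but one of those coordinates of $v_0$, leaving a truncation of weight at most $1+(j+1)(n-k)$ coming from an admissible codeword. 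Your block-Toeplitz set-up and the column count $(j+1)n-\bigl((j+1)k-1\bigr)=(n-k)(j+1)+1$ are both correct; the only, but essential, missing step is forcing the exhibited input to have $u_0\neq 0$ (equivalently $v_0\neq 0$), which requires choosing the annihilated columns along an invertible submatrix of $G_0$ rather than arbitrarily.
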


\begin{corollary}\textnormal{\cite[Corollary 2.3]{gl03}}\label{cor:boundfori<j} 
If $d_j^c\leq(n-k)(j+1)+1$ for some $j\in\N_0$, then $d_i^c\leq(n-k)(i+1)+1$ for every $i<j$.
\end{corollary}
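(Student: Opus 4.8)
The plan is to prove the substantive (equality) form of the statement: if $d_j^c=(n-k)(j+1)+1$ for some $j$, then $d_i^c=(n-k)(i+1)+1$ for every $i<j$. (Read with $\leq$ on both sides the conclusion is already contained in Theorem~\ref{thm:bound_dj}, which gives the bound for every index unconditionally; the real content is that meeting the bound at $j$ forces it to be met at every smaller index.) Since Theorem~\ref{thm:bound_dj} supplies $d_i^c\leq(n-k)(i+1)+1$, it suffices to prove the reverse inequality under the hypothesis.

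The engine is the single-step increment bound
$$d_{i+1}^c\leq d_i^c+(n-k),\qquad(*)$$
which I would then iterate to obtain $d_j^c\leq d_i^c+(n-k)(j-i)$. Feeding in $d_j^c=(n-k)(j+1)+1$ yields $d_i^c\geq(n-k)(i+1)+1$, and comparison with Theorem~\ref{thm:bound_dj} forces equality for every $i\leq j$.

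To prove $(*)$, fix a generator matrix $G(z)$ and pick $v(z)\in\C$ with $v_0\neq 0$ realizing $d_i^c$, so $\wt(v_{[0,i]}(z))=d_i^c$. Because $z^{i+1}g(z)\in\C$ for every $g(z)\in\C$, I would pass to $w(z)=v(z)+z^{i+1}g(z)$: this leaves $w_0=v_0\neq 0$ and the coefficients up to index $i$ unchanged, while $w_{i+1}=v_{i+1}+g(0)$. As $g$ runs over $\C$, the constant term $g(0)$ runs over the row space $V$ of $G(0)$, so the smallest achievable weight of $w_{i+1}$ is the minimum weight of the coset $v_{i+1}+V$. By the redundancy bound --- every coset of a $k$-dimensional subspace of $\F^n$ contains a vector of Hamming weight at most $n-k$ --- I can choose $g$ with $\wt(w_{i+1})\leq n-k$. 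Then $\wt(w_{[0,i+1]}(z))=d_i^c+\wt(w_{i+1})\leq d_i^c+(n-k)$, and since $w$ is admissible for $d_{i+1}^c$ this gives $(*)$.

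The delicate point --- and where the standing hypotheses of column-distance theory must enter --- is that the coset bound equals $n-k$ only when $\dim V=k$, i.e. when $\rk G(0)=k$; for a subspace of dimension $r<k$ the bound degrades to $n-r>n-k$, and a degenerate code such as the one generated by $(z,z)$ has no codeword with $v_0\neq 0$ at all. I would therefore carry out the argument under the usual assumption that the encoder is delay-free, $\rk G(0)=k$, which also makes the correspondence $v_0\neq 0\Leftrightarrow u_0\neq 0$ clean. An equivalent and tidier packaging writes $v_{[0,j]}=u_{[0,j]}G_j^c$ via the sliding generator matrix and reads $(*)$ off its block-Toeplitz structure, the covering-radius estimate being precisely what controls the fresh $G_0$-block appended at each step.
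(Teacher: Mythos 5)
The paper itself gives no proof of this corollary (it is quoted from \cite{gl03}), but your argument --- the single-step bound $d_{i+1}^c\le d_i^c+(n-k)$, obtained by minimizing the weight of $v_{i+1}$ over the coset $v_{i+1}+\mathrm{rowspace}(G_0)$ via the redundancy bound on the covering radius, and then iterated --- is exactly the proof given in that reference, and it is correct. You are also right on the two side points: as transcribed here with ``$\leq$'' the statement is vacuous in view of Theorem \ref{thm:bound_dj} and the intended content is the equality version, and the coset estimate does require the standard delay-free assumption $\rk G(0)=k$, which you correctly flag.
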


Obviously, $d_j^c\leq \dfree(\C)$ for every $j$. It is easy to see that the maximum index for which the bound (\ref{eq:col_dist_bound}) is achievable is for $j=L$, where $$L:=\floor{\frac{\delta}{k}} + \floor{\frac{\delta}{n-k}}.$$ The $(L+1)$-tuple of numbers $(d_0^c,\dots, d_L^c)$ is called the \emph{column distance profile} of the code $\C$.

\begin{definition}
An $(n,k,\delta)$ convolutional code $\C$  whose column distances $d_j^c$ meet the bound of Theorem \ref{thm:bound_dj} with equality, for all $j=0,\dots,L$,  is called \emph{maximum distance profile} (MDP).
\end{definition}

Recall that the encoding map of an $(n,k,\delta)$ convolutional code $\C$ is given by the action of a polynomial matrix $G(z)$ and it can be expressed via the multiplication by the following polynomial:
$$G(z):=G_0+G_1z +\dots + G_mz^m,$$
where $G_i\in\F^{k\times n}$ and $G_m\ne 0$. 
In the same way, the parity-check matrix is given by
$$H(z):=H_0+H_1z +\dots + H_\mu z^\nu, ,$$
with $H_i\in\F^{(n-k)\times n}$ and $H_\nu\ne 0$.

Let $\C$ be an $(n,k,\delta)$ convolutional code, $G(z)$ be a generator matrix of $\C$ and  $H(z)$ be a parity-check matrix for $\C$. For any $j\in\N_0$, we define the \emph{$j$-th truncated sliding generator matrix}
and the \emph{$j$-th truncated sliding parity-check matrix} as
\begin{align*}
& G_j^c :=\begin{pmatrix}
G_0 & G_1 & \cdots  & G_j \\
 & G_0 & \cdots & G_{j-1} \\
 & & \ddots & \vdots \\
 & &  & G_0\\
\end{pmatrix}\in\F^{(j+1)k\times (j+1)n},\\
& H_j^c :=\begin{pmatrix}
H_0 & & &  \\
H_1 & H_0 & \\
\vdots & \vdots & \ddots & \\
H_{j} & H_{j-1} & \cdots & H_0
\end{pmatrix}\in\F^{(j+1)(n-k)\times (j+1)n},
\end{align*}
where $G_j = 0,$ whenever $j>m$ and $H_j=0$ whenever $j>\nu$.

These sliding matrices are relevant for the following characterization of MDP convolutional codes.

\begin{theorem}\textnormal{\cite[Corollary 2.3 and Theorem 2.4]{gl03}}\label{thm:characterizationMDP}
Let $G(z) = \sum_{i=0}^m G_iz^i$ and $H(z) = \sum_{i=0}^nu H_iz^i$  be a left prime generator matrix and a left prime  parity-check matrix, respectively, of an $(n,k,\delta)$ convolutional code $\C$. The following are equivalent:
\begin{enumerate}
  \item  $d_j^c(\C)=(n-k)(j+1)+1$\label{itm:cond1},
  \item every $(j+1)k \times (j+1)k$ full-size minor of $G_j^c$ formed by the columns with indices $1\leq t_1< \dots < t_{(j+1)k}$, where $t_{sk+1}>sn$ for $s=1,\dots,j$, is nonzero \label{itm:cond2},
  \item  every $(j+1)(n-k) \times (j+1)(n-k)$ full-size minor of $H_j^c$ formed by the columns with indices $1\leq t_1< \dots < t_{(j+1)(n-k)}$, where $t_{s(n-k)+1}\leq sn$ for $s=1,\dots,j$, is nonzero.\label{itm:cond3}
\end{enumerate}
In particular, $\C$ is MDP if and only if one of the above equivalent conditions holds for $j= L$.
\end{theorem}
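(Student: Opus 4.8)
The plan is to prove the two equivalences $(1)\Leftrightarrow(2)$ and $(1)\Leftrightarrow(3)$ by reducing each column distance to a rank condition on the corresponding sliding matrix, and then to settle the final ``in particular'' clause by monotonicity. The starting point is the linear-algebra dictionary between truncated codewords and the sliding matrices. Writing a codeword as $v(z)=u(z)G(z)$ with input $u(z)=\sum_i u_iz^i$ and comparing coefficients up to degree $j$ gives $(v_0,\dots,v_j)=(u_0,\dots,u_j)\,G_j^c$; dually, restricting $H(z)v(z)^\top=0$ to its first $j+1$ coefficients reads $H_j^c(v_0,\dots,v_j)^\top=0$, so the admissible truncations are exactly the elements of $\ker H_j^c$. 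First I would use the left primeness of $G(z)$ (equivalently the existence of a polynomial right inverse, and hence the saturation property $z\,w(z)\in\C\Rightarrow w(z)\in\C$) to reduce the minimisation defining $d_j^c$ to codewords whose leading coefficient does not vanish; this normalisation is what prevents lowering the weight by time shifts, and it is the reason the index constraints in (2) and (3) are not simply symmetric under complementation.

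With the dictionary in place, $(1)\Leftrightarrow(2)$ becomes an MDS-type statement. Via the upper bound of Theorem~\ref{thm:bound_dj}, saying $d_j^c=(n-k)(j+1)+1$ is equivalent to saying that no normalised input produces an output $(u_0,\dots,u_j)G_j^c$ vanishing on a set of $(j+1)k$ coordinates, i.e.\ that every \emph{admissible} square submatrix of $G_j^c$ of size $(j+1)k$ is nonsingular. The word ``admissible'' is the crux: because $G_j^c$ is block upper triangular, a choice of columns $t_1<\dots<t_{(j+1)k}$ can support a nonzero minor only if, for each $s$, at least $(j+1-s)k$ of them lie in the last $j+1-s$ block columns, and by a Hall / system-of-distinct-representatives argument this is exactly $t_{sk+1}>sn$. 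I would therefore show that minors violating this condition are identically zero and contribute nothing, while the surviving minors are precisely those in (2), so that the rank condition is equivalent to (2).

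The equivalence $(1)\Leftrightarrow(3)$ I would obtain in the same spirit, working with $\ker H_j^c$: the bound is met iff every admissible $(j+1)(n-k)\times(j+1)(n-k)$ submatrix of $H_j^c$ is nonsingular, where now the block lower triangular shape of $H_j^c$ produces the complementary family of admissible column patterns, giving the inequalities $t_{s(n-k)+1}\le sn$. Alternatively one may derive $(2)\Leftrightarrow(3)$ directly from the information-set duality between $G_j^c$ and $H_j^c$ induced by $H(z)G(z)^\top=0$: a set of columns is an information set for the sliding generator iff its complement is one for the sliding parity-check, which pairs the two families of index constraints against each other. Here the left primeness of $H(z)$ enters to guarantee that every element of $\ker H_j^c$ is the genuine truncation of a codeword, so that the kernel really computes $d_j^c$.

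Finally, the ``in particular'' statement follows by combining Corollary~\ref{cor:boundfori<j} with the fact that $L=\lfloor\delta/k\rfloor+\lfloor\delta/(n-k)\rfloor$ is the largest index for which the bound is attainable: if any of the equivalent conditions holds at $j=L$ then $d_L^c$ meets its bound, hence so does $d_i^c$ for every $i\le L$, which is exactly the definition of MDP. The main obstacle I anticipate is not either algebraic equivalence as such but the combinatorial bookkeeping in the middle step: proving rigorously that the ``trivially zero'' minors are exactly those excluded by the stated index inequalities, and checking that the leading-coefficient normalisation interacts correctly with this combinatorics so that the surviving minors control $d_j^c$ with no off-by-one error. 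Making the triangular Hall argument and the normalisation line up is where the real care is required.
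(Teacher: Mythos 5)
First, a caveat: the paper itself does not prove this statement; it is imported verbatim from \cite{gl03} (Corollary 2.3 and Theorem 2.4 there), so there is no in-paper proof to compare against. Your sketch follows the same architecture as the original source: the coefficient dictionary $(v_0,\dots,v_j)=(u_0,\dots,u_j)G_j^c$ and $H_j^c(v_0,\dots,v_j)^\top=0$, the identification of the trivially zero minors from the block-triangular shape, the duality between complementary column sets of $G_j^c$ and $H_j^c$ induced by $H(z)G(z)^\top=0$, and Corollary \ref{cor:boundfori<j} for the final clause. All of that is the right frame.

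The gap is in the step you compress into an ``i.e.'': the passage between ``$d_j^c$ meets the bound'' and ``every admissible full-size minor is nonzero''. The quantifiers do not match. Nonsingularity of $G_j^c$ restricted to an admissible column set $T$ excludes \emph{every} nonzero $u$ from the left kernel of that submatrix, whereas the column distance only constrains inputs with $u_0\neq 0$; conversely, a normalised input of weight at most $(n-k)(j+1)$ yields a truncation with at least $(j+1)k$ zero positions, but nothing guarantees that some $(j+1)k$ of those positions form an \emph{admissible} set. Both mismatches are repaired by the same mechanism, which your sketch does not supply: induction on $j$. If $u_0=\dots=u_{i-1}=0$ and $u_i\neq 0$, stripping the leading zero blocks turns $uG_j^c$ restricted to $T$ into a statement about $G_{j-i}^c$ and the tail of $T$ (still admissible at level $j-i$), so one invokes the equivalence for $j-i<j$; and if the zero set of a normalised low-weight truncation contains no admissible subset, a counting argument shows that already $\wt(v_{[0,s-1]})\leq s(n-k)$ for some $s\leq j$, i.e.\ $d_{s-1}^c$ fails its bound, which is again handled by the inductive hypothesis together with Corollary \ref{cor:boundfori<j}. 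Without this induction the ``i.e.'' is simply false as stated. A smaller point: if you actually run your Hall-type argument on the lower block triangular $H_j^c$, the family of non-trivially-zero minors you obtain is $t_{s(n-k)}\leq sn$, which is exactly the complement, under the duality you invoke, of the family in condition \ref{itm:cond2}; the inequality $t_{s(n-k)+1}\leq sn$ printed in the statement appears to be an off-by-one, and carrying out your own step to the end would have exposed it.
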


Observe that the minors considered in Theorem \ref{thm:characterizationMDP} are the only full-size minors of $G_j^c$ and $H_j^c$ that can possibly be non-zero. For this reason, we call these minors \emph{non trivially zero}. 

In the following, we refer to the conditions \ref{itm:cond2} and \ref{itm:cond3} of Theorem \ref{thm:characterizationMDP} as MDP property \ref{itm:cond2} (on the sliding generator) and MDP property \ref{itm:cond3} (on the parity-check matrix) of a convolutional code. 

\begin{remark}

In Theorem \ref{thm:characterizationMDP} we assume that $G(z)$ and $H(z)$ are left prime. We will explain the exact role of this property:
\begin{enumerate}[label=(\roman*)]
    \item Considering the corresponding proof in \cite{gl03}, one observes that for the equivalence between conditions \ref{itm:cond1}  and \ref{itm:cond2} it is in fact enough to assume that $G_0$ is full rank (which is a consequence of $G(z)$ being left prime). However, both \ref{itm:cond1} and \ref{itm:cond2} imply that $G_0$ is full rank. For \ref{itm:cond1} this is true, because for $j=0$ this means that $G_0$ is the generator matrix of an MDS block code, i.e. in particular full rank. For \ref{itm:cond2} this follows immediately from the structure of $G_j^c$.
    Hence, it is possible to get rid of the assumption that $G(z)$ is left prime. However, note that if $G(z)$ is not left prime, the corresponding code is catastrophic. 
    
    \item Now we consider the equivalence between \ref{itm:cond1} and \ref{itm:cond3}, which of course is only possible if the code has a parity-check matrix, i.e. is noncatastrophic.
    If $H(z)$ is not left prime, then there exists an equivalent row-reduced and left prime parity-check matrix for the code $\tilde{H}(z)$, such that $H(z)=U(z)\tilde{H}(z)$ with $U(z)\in\mathbb F[z]^{(n-k)\times(n-k)}$ and $\deg(\det U(z)))>0$. Hence, with $U(z)=\sum_{i}U_iz^i\in\F^{(n-k)\times (n-k)}[z]$ and $U_i=0$ for $i>\deg(U(z))$, one has
    $$\left[\begin{array}{ccc}H_0 & & 0\\ \vdots & \ddots & \\ H_j & \cdots & H_0\end{array}\right]=\left[\begin{array}{ccc}U_0 & & 0\\ \vdots & \ddots & \\ U_j & \cdots & U_0\end{array}\right]\left[\begin{array}{ccc}\tilde{H}_0 & & 0\\ \vdots & \ddots & \\ \tilde{H}_j & \cdots & \tilde{H}_0\end{array}\right],$$
    for all $j\in\mathbb N_0$. Since $\tilde{H}(z)$ is left prime, $\tilde{H}_0$ is full rank. If $H(z)$ fulfills \ref{itm:cond3}, then all the full-size minors of $H_0$ are nonzero. Together with $H_0=U_0\tilde{H}_0$, this implies that $U_0$ and $\left[\begin{array}{ccc}U_0 & & 0\\ \vdots & \ddots & \\ U_j & \cdots & U_0\end{array}\right]$ are full rank. Consequently, $H_j^c$ fulfills \ref{itm:cond3} if and only if $\tilde{H}_j^c$ fulfills \ref{itm:cond3}, and since $\tilde{H}(z)$ is left prime, $H(z)$ and $\tilde{H}(z)$ are parity-check matrices of an MDP convolutional code whose degree $\delta$ is equal to the sum of the row degrees of $\tilde{H}(z)$. Hence, also for the implication from \ref{itm:cond3} to \ref{itm:cond1}, it is not necessary that the parity-check matrix of the code is left prime. 
    
    However, to construct an MDP convolutional code with a given $\delta$ it is necessary to construct it via a left prime parity-check matrix. Otherwise we do not know the degree of the constructed code since it is in general not an easy task to determine the degree of a convolutional code if we only know one of its parity-check matrices which is not left prime, as shown in Example \ref{ex:degreeleftprime}. 
    In addition, the implication from \ref{itm:cond1} to \ref{itm:cond3} is only true if we assume at least that $H_0$ has full rank (which is a consequence of $H(z)$ being left prime). To see this, consider a parity-check matrix that fulfills \ref{itm:cond3}, i.e. is a parity-check matrix of an MDP convolutional code, and multiply it by $z\mathrm{I}_{n-k}$. The resulting matrix is still a parity-check of the same MDP convolutional code but it has $H_0=0$ and hence, can not fulfill \ref{itm:cond3}.

\end{enumerate}
\end{remark}


\section{Left primeness of parity-check and generator matrices of MDP convolutional codes}\label{sec:mainresult}

In this section, we show for which parameters condition \ref{itm:cond3} of Theorem \ref{thm:characterizationMDP} applied on an $(n,k)$ convolutional code $\C$ for $j=L$ implies that the corresponding parity-check matrix of $\C$ is left prime and thus the degree of $\C$ is equal to the sum of the row degrees of this parity-check matrix.
Moreover, we show for which parameters condition \ref{itm:cond2} of Theorem \ref{thm:characterizationMDP} for $j=L$ implies that the considered convolutional code is noncatastrophic, i.e. for these parameters every MDP convolutional code is noncatastrophic.

\begin{theorem}\label{r1}
Consider $H(z)\in\mathbb F[z]^{(n-k)\times n}$ with $\deg(H(z))=\nu$ and set $\delta=(n-k)\nu$, $r=\left\lfloor\frac{\delta}{k}\right\rfloor$.

If the matrix $$\bar{H}:=\left[\begin{array}{ccc}H_0\\ \vdots & \ddots\\ H_{\nu} & & H_0\\ & \ddots & \vdots\\ & & H_{\nu}\end{array}\right]\in\mathbb F^{(n-k)(r+\nu+1)\times n(r+1)}$$
has full (row) rank, then $H(z)$ is left prime.
\end{theorem}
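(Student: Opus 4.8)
The plan is to read the matrix $\bar H$ as the coordinate matrix of a polynomial multiplication map and then invoke the right-inverse characterization of left primeness recalled after Definition \ref{def:leftprime}. Concretely, writing $H(z)=\sum_{i=0}^{\nu}H_iz^i$ and letting $u(z)=\sum_{j=0}^{r}u_jz^j\in\F[z]^n$ be an arbitrary column vector of degree at most $r$, the coefficient of $z^i$ in the product $H(z)u(z)$ equals $\sum_{j}H_{i-j}u_j$, with the convention $H_\ell=0$ for $\ell<0$ or $\ell>\nu$. Stacking $u_0,\dots,u_r$ into a vector of length $n(r+1)$ and the coefficients of $H(z)u(z)$, which has degree at most $r+\nu$, into a vector of length $(n-k)(r+\nu+1)$, one checks immediately that this linear map is represented exactly by $\bar H$, since the $(i,j)$ block of $\bar H$ is $H_{i-j}$. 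Thus $\bar H$ is the matrix of the map $\Phi\colon u(z)\mapsto H(z)u(z)$ sending $\{u\in\F[z]^n:\deg u\le r\}$ to $\{b\in\F[z]^{n-k}:\deg b\le r+\nu\}$.

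With this identification, the hypothesis that $\bar H$ has full row rank is equivalent to $\Phi$ being surjective, i.e. every $b(z)\in\F[z]^{n-k}$ with $\deg b\le r+\nu$ can be written as $H(z)u(z)$ for some $u(z)\in\F[z]^n$ with $\deg u\le r$. I would then specialize this to the constant right-hand sides: for each standard basis vector $e_s\in\F^{n-k}$, $s=1,\dots,n-k$, viewed as a polynomial of degree $0\le r+\nu$, surjectivity of $\Phi$ produces $R_s(z)\in\F[z]^n$ with $\deg R_s\le r$ and $H(z)R_s(z)=e_s$. Collecting these columns into $R(z):=[\,R_1(z)\mid\cdots\mid R_{n-k}(z)\,]\in\F[z]^{n\times(n-k)}$ yields $H(z)R(z)=I_{n-k}$, that is, a polynomial right inverse of $H(z)$. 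By the characterization recalled after Definition \ref{def:leftprime}, the existence of such a right inverse is equivalent to $H(z)$ being left prime, which is exactly the claim.

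The points requiring care are bookkeeping rather than genuine obstacles. First, one must verify the block identification $(\bar H)_{ij}=H_{i-j}$ together with the degree bounds $\deg u\le r$ and $\deg(Hu)\le r+\nu$, so that $\bar H$ is the \emph{full} matrix of $\Phi$ on these truncated spaces and not a proper submatrix; this is where the precise format $(n-k)(r+\nu+1)\times n(r+1)$ matters, and one may note in passing that $\delta=(n-k)\nu$ with $r=\floor{\delta/k}$ forces $(n-k)(r+\nu+1)\le n(r+1)$, so full row rank is not vacuously impossible. Second, it is worth emphasizing that the argument uses surjectivity of $\Phi$ only on the constant targets $e_s$; no control over higher-degree right-hand sides is needed, which is precisely what makes the implication immediate. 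I expect the hardest part, if any, to be merely convincing the reader that full row rank of this particular block-Toeplitz matrix amounts to nothing more than solvability of $H(z)R(z)=I_{n-k}$ under the degree bound $\deg R\le r$.
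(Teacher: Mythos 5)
Your proposal is correct and follows essentially the same route as the paper's proof: both interpret full row rank of $\bar H$ as surjectivity of the truncated multiplication map, solve for a degree-$\le r$ preimage of the constant target $I_{n-k}$ (i.e. $\bar H\bar X=(I_{n-k},0,\dots,0)^\top$), and conclude left primeness via the polynomial right-inverse characterization. The dimension count $(n-k)(r+\nu+1)<n(r+1)$ you note in passing is the same one the paper records at the start of its proof.
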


\begin{proof}

First note that since $(n-k)(r+\nu+1)=n(r+1)+\delta-k(r+1)<n(r+1)$, $\bar{H}$ has more columns than rows. As $\bar{H}$ has full row rank, the map $\mathbb F^{n(r+1)}\rightarrow\mathbb F^{(n-k)(r+\nu+1)}$, $v\mapsto \bar{H}v$ is surjective and there exists $\bar{X}=\begin{pmatrix} X_0\\ \vdots\\ X_r\end{pmatrix}\in\mathbb F^{(r+1)n\times (n-k)}$ with $X_i\in\mathbb F^{n\times(n-k)}$ for $i=1,\hdots,r$ such that $\bar{H}\bar{X}=\begin{pmatrix} I_{n-k}\\0_{n-k}\\ \vdots\\0_{n-k}\end{pmatrix}$. Defining $X(z)=\sum_{i=0}^{r}X_iz^{i}$, one gets $H(z)X(z)=\mathrm{I}_{n-k}$ and hence $H(z)$ is left prime.

\end{proof}

\begin{corollary}
Let $n,k,\delta \in\mathbb N$ with $k<n$ and $(n-k)\mid\delta$ and set $\nu=\frac{\delta}{n-k}$. If $H(z)=\sum_{i=0}^{\nu}H_iz^{i}\in\mathbb F^{(n-k)\times n}[z]$

has the property that all full-size minors of $H_L^c$ with $L=\left\lfloor\frac{\delta}{k}\right\rfloor+\frac{\delta}{n-k}$ that are not trivially zero are nonzero, then $H(z)$ is a left prime parity-check matrix of an $(n,k,\delta)$ MDP convolutional code.
\end{corollary}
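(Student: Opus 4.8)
The plan is to derive this as a direct consequence of Theorem \ref{r1}. Since $(n-k)\mid\delta$ and $\nu=\delta/(n-k)$, we have $\delta=(n-k)\nu$ and $r=\floor{\delta/k}$, so the quantities $\nu,r$ and $L=r+\nu$ are exactly those of Theorem \ref{r1}; it therefore suffices to prove that the matrix $\bar H\in\F^{(n-k)(r+\nu+1)\times n(r+1)}$ defined there has full row rank. My first step is to recognise $\bar H$ as the submatrix of $H_L^c$ formed by its first $r+1$ block columns (equivalently, its first $(r+1)n$ ordinary columns). Indeed, both matrices have $(L+1)(n-k)=(r+\nu+1)(n-k)$ rows, in each the block in position $(i,j)$ equals $H_{i-j}$, and $H_L^c$ has $L+1=r+\nu+1$ block columns of which $\bar H$ retains only the leftmost $r+1$. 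Consequently every full-size (row) minor of $\bar H$ coincides with a full-size minor of $H_L^c$ all of whose columns lie among the first $(r+1)n$.

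Write $N:=(r+\nu+1)(n-k)=(r+1)(n-k)+\delta$ for the number of rows of $\bar H$. The crux is to produce one nonvanishing such minor, and the simplest candidate is the leading one, given by the columns $t_i=i$ for $i=1,\dots,N$. Two checks are needed. First, these columns must lie within the first $r+1$ block columns, i.e. $N\le(r+1)n$; this is equivalent to $\delta\le(r+1)k$, which holds because $r=\floor{\delta/k}$ forces $(r+1)k>\delta$. Second, this column selection must be non trivially zero in the sense of condition \ref{itm:cond3} of Theorem \ref{thm:characterizationMDP} for $j=L$: the requirement there is $t_{s(n-k)+1}\le sn$ for $s=1,\dots,L$, and with $t_i=i$ this reads $s(n-k)+1\le sn$, i.e. $sk\ge 1$, which holds for every $s\ge 1$. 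Hence the leading $N\times N$ minor of $H_L^c$ is non trivially zero, so by hypothesis it is nonzero. Therefore $\bar H$ has full row rank, and Theorem \ref{r1} yields that $H(z)$ is left prime; being left prime, it is a parity-check matrix of the noncatastrophic code $\C:=\ker H(z)$, which has dimension $k$.

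It remains to identify $\C$ as an $(n,k,\delta)$ MDP code. Given that $\C$ has degree $\delta$, the MDP property is immediate: $H(z)$ is left prime and satisfies condition \ref{itm:cond3} for $j=L=\floor{\delta/k}+\floor{\delta/(n-k)}$, so Theorem \ref{thm:characterizationMDP} applies verbatim. The only remaining point is the degree. I would argue that the minor condition forces $H_\nu$ to have full row rank, by an argument dual to the one (given in the remark following Theorem \ref{thm:characterizationMDP}) that derives $\rk H_0=n-k$ from condition \ref{itm:cond3}, reflecting the band of $H_L^c$ so that $H_\nu$ occupies the diagonal role played there by $H_0$. Once $H_\nu$ has full row rank, no row of $H_\nu$ vanishes, every row of $H(z)$ has degree exactly $\nu$, and the leading row coefficient matrix of $H(z)$ equals $H_\nu$; hence $H(z)$ is row-reduced, and by \cite{ro01} the degree of $\C$ equals the sum of its row degrees, namely $(n-k)\nu=\delta$.

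I expect the pinning down of the degree, i.e. proving $\rk H_\nu=n-k$, to be the main obstacle: unlike the left-primeness step, which follows cleanly once $\bar H$ is seen inside $H_L^c$, this requires either a clean reflection of the band structure of $H_L^c$ or a direct construction of a combinatorially non trivially zero minor of $H_L^c$ that would vanish whenever $\rk H_\nu<n-k$, contradicting the hypothesis.
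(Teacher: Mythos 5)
Your argument follows the paper's proof almost exactly: the paper likewise observes that $\bar H$ is a submatrix of $H_L^c$ with the same number of rows, invokes the minor hypothesis to get full row rank of $\bar H$, applies Theorem \ref{r1} for left primeness, and then uses full rank of $H_\nu$ to pin the degree at $(n-k)\nu=\delta$ before citing Theorem \ref{thm:characterizationMDP}. You are in fact more explicit than the paper in one place: you exhibit a concrete non-trivially-zero full-size minor of $\bar H$ (the leading one, columns $t_i=i$) and verify both that it fits inside the first $(r+1)n$ columns, via $(r+1)k>\delta$, and that it satisfies the combinatorial condition $t_{s(n-k)+1}\le sn$; the paper merely asserts that such a minor exists. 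All of that is correct.

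The one genuine gap is the step you yourself flag as ``the main obstacle,'' namely $\rk H_\nu=n-k$, for which you only sketch two possible strategies (a duality/reflection argument, or constructing a special vanishing minor) without carrying either out. Neither is needed: the claim is immediate from what you have already proved. The bottom block row of $\bar H$ (block row index $r+\nu$) has its $(r+\nu,j)$ block equal to $H_{r+\nu-j}$, which is nonzero only for $j=r$; hence the last $n-k$ rows of $\bar H$ are $\bigl(\,0_{(n-k)\times rn}\;\;H_\nu\,\bigr)$. Since $\bar H$ has full row rank, these rows are linearly independent, so $H_\nu$ has full row rank. From there your conclusion goes through as you wrote it: every row of $H(z)$ has degree exactly $\nu$, $H(z)$ is row-reduced with leading row coefficient matrix $H_\nu$, and the degree of the code is $(n-k)\nu=\delta$, after which Theorem \ref{thm:characterizationMDP} gives the MDP property. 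So the proposal is salvageable with a one-line fix, and the fix is exactly the observation the paper uses when it says that full row rank of $\bar H$ ``additionally implies that $H_\nu$ is full rank.''
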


\begin{proof}
With the notation of the preceding theorem, one gets $L=r+\nu$ and $\bar{H}$ is a submatrix of $H_L^c$ with the same number of rows. Hence, there is a full-size minor of $\bar{H}$ that is nonzero and $\bar{H}$ has full (row) rank, which additional implies that $H_{\nu}$ is full rank. Consequently, $H(z)$ is left prime and thus, it is the parity-check matrix of an $(n,k,\delta)$ convolutional code, where $\delta$ is equal to the sum of the row degrees of $H(z)$, i.e. $\delta=(n-k)\nu$ as $H_{\nu}$ is full rank. Then, Theorem \ref{thm:characterizationMDP} implies that this code is MDP.
\end{proof}

\begin{remark}\label{g}
With the same reasoning one can show that for $n,k,\delta \in\mathbb N$ with $k<n$ and $k\mid\delta$ and $m=\frac{\delta}{k}$, if $G(z)=\sum_{i=0}^{m}G_iz^{i}\in\mathbb F^{k\times n}[z]$

has the property that all full-size minors of $G_L^c$ with $L=\frac{\delta}{k}+\left\lfloor\frac{\delta}{n-k}\right\rfloor$ that are not trivially zero are nonzero, then $G(z)$ is the generator matrix of a noncatastrophic $(n,k,\delta)$ MDP convolutional code.
\end{remark}

\begin{remark}
The conditions of the preceding theorem, corollary and remark are not necessary (only sufficient) to ensure that the corresponding polynomial matrix is left prime. As mentioned before, a polynomial matrix is left prime if and only if it has a polynomial right inverse and we provided sufficient conditions in order that this is true.
\end{remark}

The following example shows that if $(n-k)\nmid\delta$ (resp. $k\nmid\delta$), then the MDP property on the minors of the sliding parity-check (resp. generator) matrix does in general not imply that the parity-check (resp. generator) matrix of a convolutional code is left prime.

\begin{example}
Let $1\leq \delta<k$ and $\delta<n-k$, i.e. $L=0$. We get that $\deg(H(z))=\lfloor\frac{\delta}{n-k}\rfloor+1=1$, so $H(z) = H_0+H_1z$  and $H_L^c=H_0$. If we choose $H_0$ such that all full-size minors are nonzero and $H_1=-H_0$, then $H_L^c$ fulfills the MDP property \ref{itm:cond3} but $H(z)=(z-1)\mathrm{I}_{n-k}H_1$, i.e. $H(z)$ is not left prime and the degree of the code with this parity-check matrix is zero. Hence, this can not be an $(n,k,\delta)$ MDP convolutional code. Equivalently, we can show that for such code parameters a generator matrix $G(z)=G_0+G_1z$ with $G_0=-G_1$ having all full-size minors nonzero is not left prime but $G_L^c$ fulfills the MDP criterion \ref{itm:cond2}, i.e. $G(z)$ is the generator matrix of a catastrophic $(n,k,\delta)$ MDP convolutional code.
\end{example}

However, it is possible to modify Theorem \ref{r1}, imposing stronger assumptions to get similar results for the case $(n-k)\nmid\delta$. The case of convolutional codes where $(n-k)\nmid\delta$ is deeply investigated in \cite{napp2016constructing}.

\begin{theorem}\label{thm:n-knotdivdelta}
Let $(n-k)\nmid\delta$. Consider $H(z)\in\mathbb F[z]^{(n-k)\times n}$ with $\deg(H(z))=\nu=\lfloor\frac{\delta}{n-k}\rfloor+1$. If there exist $r\in\mathbb N_0$ and a set $S\subset\{1,\hdots, n-k\}$ with cardinality $|S|=\delta - (n-k) \left\lfloor \frac{\delta}{n-k} \right\rfloor$ such that

the matrix $$\bar{H}:=\left[\begin{array}{ccc}H_0\\ \vdots & \ddots\\ H_{\nu} & & H_0\\ & \ddots & \vdots\\ & &\tilde{H}_{\nu}\end{array}\right]\in\mathbb F^{(n-k)(r+\nu+1)\times n(r+1)},$$
where $\tilde{H}_{\nu}$ consists of the $\delta - (n-k) \left\lfloor \frac{\delta}{n-k} \right\rfloor$ rows of $H_{\nu}$ whose row index lies in $S$, has full (row) rank, then $H(z)$ is left prime.
\end{theorem}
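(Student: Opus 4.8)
The plan is to follow the template of Theorem~\ref{r1} and exhibit a polynomial right inverse $X(z)\in\F[z]^{n\times(n-k)}$ with $H(z)X(z)=I_{n-k}$; by the characterization recalled after Definition~\ref{def:leftprime} this is equivalent to left primeness of $H(z)$. Writing $X(z)=\sum_{i=0}^{r}X_iz^{i}$ with $X_i\in\F^{n\times(n-k)}$ and stacking $\bar X=(X_0;\dots;X_r)$, the requirement $H(z)X(z)=I_{n-k}$ says that the constant coefficient of the product equals $I_{n-k}$ and every coefficient of $z^{1},\dots,z^{r+\nu}$ vanishes. These are exactly the block equations obtained by applying the full staircase matrix built from $H_0,\dots,H_{\nu}$ to $\bar X$ and prescribing the right-hand side $(I_{n-k};0;\dots;0)$, the $\ell$-th block being $\sum_{a+b=\ell}H_aX_b$ for $\ell=0,\dots,r+\nu$.

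The new phenomenon relative to Theorem~\ref{r1} is that the top-degree block equation need not be imposed on all $n-k$ rows. First I would record the degree bookkeeping: since $\deg(H(z))=\nu$ and the rows of $H(z)$ whose index lies outside $S$ have degree strictly less than $\nu$, the matching rows of the coefficient $H_\nu$ vanish. Hence for every $i\notin S$ the $i$-th row of $H(z)X(z)$ has degree at most $(\nu-1)+r<r+\nu$, so its coefficient of $z^{r+\nu}$ is automatically zero. The only genuine constraints at degree $r+\nu$ are therefore the $|S|=\delta-(n-k)\floor{\frac{\delta}{n-k}}$ equations $\tilde H_\nu X_r=0$, which form precisely the last block-row of the reduced matrix $\bar H$.

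With this reduction the system to be solved is exactly $\bar H\bar X=(I_{n-k};0;\dots;0)$, whose right-hand side now lives in the row space of $\bar H$. As $\bar H$ has full row rank by hypothesis, the map $\bar X\mapsto \bar H\bar X$ is surjective, so such an $\bar X$ exists. The resulting $X(z)$ then satisfies the block equations in degrees $0,\dots,r+\nu-1$ outright, the $S$-rows at degree $r+\nu$ through $\tilde H_\nu X_r=0$, and the non-$S$ rows at degree $r+\nu$ for free by the degree bound above; consequently $H(z)X(z)=I_{n-k}$ and $H(z)$ is left prime.

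The step I expect to be the crux is the vanishing of the non-$S$ rows of $H_\nu$, which is what both legitimizes the reduction and explains why the hypothesis is placed on the reduced matrix $\bar H$ rather than on the full staircase. Indeed, when $(n-k)\nmid\delta$ a row-reduced $H(z)$ necessarily has exactly $|S|=\delta-(n-k)\floor{\frac{\delta}{n-k}}$ rows of maximal degree $\nu$ and $n-k-|S|$ rows of degree $\floor{\frac{\delta}{n-k}}=\nu-1$, forced by $\sum_i\deg(\mathrm{row}_i)=\delta$; thus $H_\nu$ carries $n-k-|S|$ zero rows and the full staircase can never attain full row rank, so the condition must be imposed after deleting them. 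Pinning down this degree profile—equivalently, that the rows indexed outside $S$ genuinely drop in degree—is the one place where more than bare full rank is used: without it the deleted equations on the non-$S$ rows of $H_\nu X_r$ need not vanish, and the implication can fail, in line with the example preceding the theorem.
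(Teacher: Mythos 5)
Your proof follows the paper's own route: the paper's entire proof of this theorem is the sentence ``the result can be shown exactly in the same way as Theorem~\ref{r1}'', i.e.\ use full row rank of $\bar H$ to solve $\bar H\bar X=(I_{n-k};0;\dots;0)$ and read off a polynomial right inverse $X(z)=\sum_{i=0}^{r}X_iz^{i}$. Your handling of the top-degree block is precisely the step that makes this ``same way'' claim rigorous, and you have located the crux correctly: the reduced system only enforces $\tilde H_\nu X_r=0$ on the rows indexed by $S$, so $H(z)X(z)=I_{n-k}$ additionally requires the rows of $H_\nu$ outside $S$ to vanish.

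One caveat: that vanishing is not among the hypotheses of the theorem as stated. The statement only fixes $\deg(H(z))=\nu$ and asks for full row rank of the reduced staircase; it does not say that the rows of $H(z)$ with index outside $S$ have degree $<\nu$, and your justification of that degree profile (row-reducedness together with $\sum_i\deg(\mathrm{row}_i)=\delta$) invokes assumptions that appear only in Corollary~\ref{cor:n-knotdividesdelta} and in the subsequent remark on generic row degrees, not in Theorem~\ref{thm:n-knotdivdelta} itself. The extra hypothesis is genuinely needed: over $\mathbb F_2$ with $n=3$, $k=1$, $\delta=1$ (so $\nu=1$, $|S|=1$), the matrix
$$H(z)=\begin{pmatrix}1+z & z & z\\ 0 & 1 & z\end{pmatrix}$$
has, for $r=0$ and $S=\{1\}$, the reduced staircase $\bar H=\bigl(\begin{smallmatrix}1&0&0\\0&1&0\\1&1&1\end{smallmatrix}\bigr)$ of full rank, yet every full-size minor of $H(z)$ is divisible by $1+z$, so $H(z)$ is not left prime. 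Thus your argument proves the amended statement in which the rows of $H_\nu$ outside $S$ are assumed to be zero (equivalently, $H(z)$ has generic row degrees with the degree-$\nu$ rows indexed by $S$), which is evidently what the authors intend; as written, both the theorem and its one-line proof leave this assumption implicit, and your proposal is the more careful of the two.
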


\begin{proof}
The result can be shown exactly in the same way as Theorem \ref{r1}.
\end{proof}

\begin{corollary}\label{cor:n-knotdividesdelta}
Let $n,k,\delta \in\mathbb N$ be such that $k<n$ and $(n-k)\nmid\delta$ and set $\nu=\lfloor\frac{\delta}{n-k}\rfloor+1$. Assume that $H(z)=\sum_{i=0}^{\nu}H_iz^{i}\in\mathbb F^{(n-k)\times n}[z]$ is row-reduced and it has the property that all the full-size minors of $H_L^c$ with $L=\left\lfloor\frac{\delta}{k}\right\rfloor+\nu-1$ that are not trivially zero are nonzero. Moreover, assume that there exist $r\in\mathbb N_0$ and $S\subset\{1,\hdots, n-k\}$ such that $\bar{H}$ as defined in the preceding theorem is full row rank. Then, $H(z)$ is  a left prime parity-check matrix of an $(n,k,\delta)$ MDP convolutional code.
\end{corollary}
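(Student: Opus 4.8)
The plan is to transcribe the proof of the corollary for the divisible case $(n-k)\mid\delta$, adjusting only for the fact that the row degrees of $H(z)$ are now unbalanced. The argument has three stages: left primeness, the exact value of the degree, and the MDP conclusion.

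The first stage is immediate. By hypothesis there exist $r$ and $S$ for which $\bar{H}$ has full row rank, so Theorem \ref{thm:n-knotdivdelta} applies directly and yields that $H(z)$ is left prime. Since $H(z)$ is in addition assumed row-reduced, the result of \cite{ro01} shows that $H(z)$ is a parity-check matrix of a convolutional code $\C$ whose degree $\delta'$ equals the sum of the row degrees of $H(z)$. What remains is to prove $\delta'=\delta$ and that $\C$ is MDP.

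For the degree I would first extract the top-degree information from $\bar{H}$. Its bottom $|S|$ rows have the form $(\mathbf{0}\mid\tilde{H}_{\nu})$, so full row rank of $\bar{H}$ forces $\tilde{H}_{\nu}$ to have full row rank; hence the $|S|=\delta-(n-k)\floor{\frac{\delta}{n-k}}$ rows of $H(z)$ indexed by $S$ all have degree exactly $\nu$, while by the choice of $S$ the remaining rows have degree at most $\nu-1$. This already gives the upper bound
\[\delta'\leq |S|\,\nu+\big((n-k)-|S|\big)(\nu-1)=(n-k)(\nu-1)+|S|=\delta.\]
The reverse inequality is the crux, and I expect it to be the main obstacle: I must exclude that some row has degree strictly smaller than $\nu-1$. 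The route I would take uses that the size of $\bar{H}$ is calibrated to $\delta$ — its row count equals $(n-k)(r+\nu)+|S|$, exactly the count matching the balanced profile of degrees $\nu,\dots,\nu,\nu-1,\dots,\nu-1$ — and argues that a row of degree $<\nu-1$ inserts additional structural zeros into its shifted copies inside $\bar{H}$, dropping the rank below the row count and contradicting the full-rank hypothesis. An equivalent route is to show that such a deficiency forces one of the full-size minors of $H_L^c$ that are not trivially zero to vanish, contradicting the MDP hypothesis. Either way every non-maximal row has degree exactly $\nu-1$, whence $\delta'=\delta$ and $\C$ is an $(n,k,\delta)$ code.

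The last stage is a direct appeal to the characterization. Knowing $\delta'=\delta$, the index in the hypothesis becomes $L=\floor{\frac{\delta}{k}}+\nu-1=\floor{\frac{\delta}{k}}+\floor{\frac{\delta}{n-k}}$, which is exactly the maximal index $L$ of Theorem \ref{thm:characterizationMDP}. Since $H(z)$ is left prime and, by assumption, every full-size minor of $H_L^c$ that is not trivially zero is nonzero, condition \ref{itm:cond3} holds at $j=L$, and Theorem \ref{thm:characterizationMDP} delivers that $\C$ is MDP. Apart from the balancedness argument in the third paragraph, every step is either a verbatim reuse of the divisible case or a citation of Theorem \ref{thm:n-knotdivdelta}, \cite{ro01}, and Theorem \ref{thm:characterizationMDP}.
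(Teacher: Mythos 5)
The paper supplies no proof for this corollary, so the benchmark is the proof of the divisible case, and your proposal mirrors that proof correctly in outline: Theorem \ref{thm:n-knotdivdelta} gives left primeness, row-reducedness together with \cite{ro01} reduces the degree claim to computing the sum of the row degrees, and Theorem \ref{thm:characterizationMDP} then yields the MDP property once $L=\lfloor\delta/k\rfloor+\nu-1$ is identified with $\lfloor\delta/k\rfloor+\lfloor\delta/(n-k)\rfloor$. Your lower-bound sketch is sound and worth writing out: if row $i$ of $H(z)$ had degree $d<\nu-1$, then row $i$ of block row $r+\nu-1$ of $\bar H$ consists of the entries $H_{r+\nu-1}^{(i)},\dots,H_{\nu-1}^{(i)}$, all of whose indices are $\geq\nu-1>d$, so this row of $\bar H$ is identically zero and full row rank fails.

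The genuine gap is the upper bound, which you dispatch with ``by the choice of $S$ the remaining rows have degree at most $\nu-1$.'' Nothing in the hypotheses delivers this: $S$ is only required to be a subset of the prescribed cardinality for which $\bar H$ has full row rank, and that rank condition is insensitive to the rows of $H_\nu$ indexed outside $S$ --- the bottom block row simply omits them, and in block row $r+\nu-1$ such a row appears as $(0,\dots,0,H_{\nu}^{(i)},H_{\nu-1}^{(i)})$, which may perfectly well be nonzero in the $H_\nu$-entry. If some row outside $S$ has degree $\nu$, the sum of the row degrees exceeds $(n-k)(\nu-1)+|S|=\delta$, and since $H(z)$ is left prime and row-reduced the code it defines then has degree strictly larger than $\delta$; the minor hypothesis on $H_L^c$ cannot exclude this, because this $L$ sits below the maximal index of such a code and the hypothesis only asserts that certain minors are nonzero, never that anything vanishes. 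So your argument (and, strictly speaking, the corollary itself) needs the additional assumption that $H(z)$ has generic row degrees, equivalently that the rows of $H_\nu$ outside $S$ vanish. The paper asserts exactly this in the remark following the corollary (``Theorem \ref{thm:n-knotdivdelta} and Theorem \ref{r1} imply that $H(z)$ has generic row degrees''), but there too it is claimed rather than derived from the stated hypotheses; your ``by the choice of $S$'' papers over the same hole.
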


Note that Theorem \ref{thm:n-knotdivdelta} and Theorem \ref{r1} imply that $H(z)$ has so-called \emph{generic row degrees}, i.e. its row degrees are equal to $\left\lfloor \frac{\delta}{n-k} \right\rfloor +1 $ with multiplicity $t:=\delta - (n-k) \left\lfloor \frac{\delta}{n-k} \right\rfloor$ and $\left\lfloor \frac{\delta}{n-k} \right\rfloor $ with multiplicity $n-k-t$. This ensures that for given parameters $n,k,\delta$, we consider parity-check matrices with the minimal possible degree (as polynomial) $\deg(H(z))=\nu$, where $\nu=\frac{\delta}{n-k}$ if $(n-k)\mid\delta$ and $\nu=\left\lfloor \frac{\delta}{n-k} \right\rfloor + 1$ if $(n-k)\nmid \delta$.\\
Since, according to Remark \ref{g}, for $k\mid\delta$ we can consider the generator matrix of the convolutional code, in the following, we want to investigate for which code parameters with $(n-k)\nmid\delta$ and $k\nmid\delta$ the MDP property \ref{itm:cond3} for $H$ still implies that one can find such $r$ as in Theorem \ref{thm:n-knotdivdelta} and Corollary \ref{cor:n-knotdividesdelta}.

\begin{remark}
In order that it is possible for $\bar{H}$ to have full row rank, it is necessary that 
$$(n-k)(r+\nu+1)-\Big(\delta - (n-k) \left\lfloor \frac{\delta}{n-k} \right\rfloor\Big)-n(r+1)\leq 0.$$

With $\left\lfloor \frac{\delta}{n-k} \right\rfloor=\nu-1$, this is equivalent to
\begin{equation}\label{eq:lowerboundr}
r\geq\frac{2(n-k)\nu-n-\delta}{k}.
\end{equation}

In order that $\bar{H}$ is contained in $H_L^c$, it is necessary that
\begin{equation}\label{eq:upperboundr}
r\leq\left\lfloor \frac{\delta}{k} \right\rfloor-1,
\end{equation}
as for $(n-k)\nmid\delta$ one has $L=\nu-1+\left\lfloor \frac{\delta}{k} \right\rfloor$. 

Here, we also see why it is of advantage to have generic row degrees to keep the degree (as a polynomial) of $H(z)$ small, because only if $r+\nu\leq L$, criterion \ref{itm:cond3} of Theorem \ref{thm:characterizationMDP} implies that $H(z)$ is left prime. 

Combining the upper bound \eqref{eq:upperboundr} and the lower bound \eqref{eq:lowerboundr}, we obtain the condition $$\frac{2(n-k)\nu-n-\delta}{k}\leq \left\lfloor \frac{\delta}{k} \right\rfloor-1. $$ We write $\left\lfloor \frac{\delta}{k} \right\rfloor=\frac{\delta}{k}-\varepsilon_1$ and $\left\lfloor \frac{\delta}{n-k} \right\rfloor=\frac{\delta}{n-k}-\varepsilon_2$, i.e. $\nu=\frac{\delta}{n-k}+1-\varepsilon_2$, with $0<\varepsilon_1, \varepsilon_2<1$. Then, we get that it is necessary to have
$$\frac{\delta+2(n-k)(1-\varepsilon_2)-n}{k}\leq \frac{\delta}{k}-\varepsilon_1-1 $$ and end up with the condition $$\varepsilon_1\leq\left(\frac{n}{k}-1\right)(2\varepsilon_2-1).$$ In particular, this is only possible if $\varepsilon_2>\frac{1}{2}$.

\end{remark}

Interchanging the roles of $n-k$ and $k$, $\varepsilon_1$ and $\varepsilon_2$, as well as of $\nu=\deg(H(z))$ and $m=\deg(G(z))$, we get equivalent results if we consider the generator matrix in the case that $k\nmid \delta$.

\begin{corollary}\label{cor:knotdividesdelta}
Let $n,k,\delta \in\mathbb N$ be such that $k<n$ and $k\nmid\delta$ and set $m=\lfloor\frac{\delta}{k}\rfloor+1$. Assume that $G(z)=\sum_{i=0}^{m}G_iz^{i}\in\mathbb F[z]^{(n-k)\times n}$ is row-reduced and has the property that all the full-size minors of $G_L^c$ with $L=\left\lfloor\frac{\delta}{n-k}\right\rfloor+m-1$ that are not trivially zero are nonzero. Moreover, assume that there exists $r\in\mathbb N_0$ and $S\subset\{1,\hdots, k\}$ with cardinality $|S|=\delta - k \left\lfloor \frac{\delta}{k} \right\rfloor$ such that
$$\bar{G}:=\left[\begin{array}{ccc}G_0\\ \vdots & \ddots\\ G_{m} & & G_0\\ & \ddots & \vdots\\ & &\tilde{G}_{m}\end{array}\right]\in\mathbb F^{k(r+\nu+1)\times n(r+1)},$$
where $\tilde{G}_{m}$ consists of the $\delta - k\left\lfloor \frac{\delta}{k} \right\rfloor$ rows of $G_{m}$ whose row index lies in $S$, has full row rank.
Then, $G(z)$ is the generator matrix of a noncatastrophic $(n,k,\delta)$ MDP convolutional code.
\end{corollary}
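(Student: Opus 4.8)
The plan is to reduce Corollary \ref{cor:knotdividesdelta} to the already-established results by exploiting the duality between generator and parity-check matrices, exactly as the surrounding text promises (``Interchanging the roles of $n-k$ and $k$, \ldots''). First I would observe that the statement is the mirror image of Corollary \ref{cor:n-knotdividesdelta}: one swaps the dimensions $n-k \leftrightarrow k$, swaps the degree exponents $\nu=\deg(H(z)) \leftrightarrow m=\deg(G(z))$, and correspondingly swaps the two fractional defects $\varepsilon_1 \leftrightarrow \varepsilon_2$. So the core of the argument is to show that the hypothesis on $\bar{G}$ having full row rank is precisely the condition needed to run the proof of Theorem \ref{r1} / Theorem \ref{thm:n-knotdivdelta} with $G$ in place of $H$.

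The key steps, in order, are as follows. First I would apply the analogue of Theorem \ref{thm:n-knotdivdelta} to $G(z)$: since $\bar{G}$ has full row rank and has more columns than rows (one checks $k(r+m+1)-(\delta-k\lfloor\delta/k\rfloor) < n(r+1)$ by the same arithmetic as in the proof of Theorem \ref{r1}), surjectivity of $v \mapsto \bar{G}v$ yields a polynomial right inverse $X(z)=\sum_{i=0}^r X_i z^i$ with $G(z)X(z)=\mathrm{I}_k$, hence $G(z)$ is left prime. Second, I would note that full row rank of $\bar{G}$ forces $G_m$ (or at least the relevant rows $\tilde{G}_m$) to contribute nonzero, and combined with the MDP property \ref{itm:cond2} on $G_L^c$, this certifies that $G_0$ is full rank and that $G(z)$ genuinely has the generic row degrees $m$ and $m-1$ summing to $\delta$. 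Third, with $G(z)$ now known to be left prime and row-reduced, the sum of its row degrees equals $\delta$, so it generates an $(n,k,\delta)$ convolutional code; since it is left prime, the code is noncatastrophic. Finally, invoking Theorem \ref{thm:characterizationMDP} (whose condition \ref{itm:cond2} holds by hypothesis at $j=L$) concludes that the code is MDP.

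I expect the main obstacle to be a bookkeeping one rather than a conceptual one: I must verify that $\bar{G}$ really sits inside $G_L^c$ so that the MDP hypothesis on $G_L^c$ can be transferred to $\bar{G}$, which requires $r+m \leq L = \lfloor\delta/(n-k)\rfloor + m - 1$, i.e. $r \leq \lfloor\delta/(n-k)\rfloor - 1$. This is the dual of the inequality \eqref{eq:upperboundr}, and I would check it holds together with the dual lower bound on $r$ analogous to \eqref{eq:lowerboundr}, namely $r \geq (2km - n - \delta)/(n-k)$; the compatibility of these two bounds is governed by the condition $\varepsilon_2 \leq (\tfrac{n}{n-k}-1)(2\varepsilon_1-1)$ obtained by the same $\varepsilon$-substitution as in the preceding remark. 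The only genuinely new verification is that the swapped dimension constraint $k(r+m+1) < n(r+1)$ holds, guaranteeing $\bar{G}$ is wide; everything else is a faithful transcription of the $H$-side arguments with the indices relabelled, so the proof can legitimately be stated as ``the result follows exactly as in Corollary \ref{cor:n-knotdividesdelta}, interchanging the roles of $k$ and $n-k$.''
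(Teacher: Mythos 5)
Your proof is correct and takes essentially the same route as the paper, which offers no separate argument for this corollary beyond the remark that it follows from Theorem \ref{thm:n-knotdivdelta} and Corollary \ref{cor:n-knotdividesdelta} by interchanging the roles of $k$ and $n-k$ (and of $m=\deg(G(z))$ and $\nu=\deg(H(z))$) --- precisely the reduction you carry out via the right-inverse construction of Theorem \ref{r1}. The only superfluous step is your worry about $\bar{G}$ sitting inside $G_L^c$: in the corollary the full row rank of $\bar{G}$ is a standing hypothesis rather than something to be derived from the MDP property (that derivation is the subject of the surrounding remarks on the bounds \eqref{eq:lowerboundr} and \eqref{eq:upperboundr}), so this check is not needed here, though it does no harm.
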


Moreover, the MDP property \ref{itm:cond2} for the generator matrix implies the existence of such an $r\in\mathbb N_0$ if
$$\varepsilon_2\leq\left(\frac{n}{n-k}-1\right)(2\varepsilon_1-1).$$ In particular, this is only possible if $\varepsilon_1>\frac{1}{2}$.


\section{Conclusion}\label{sec:conclusion}
In this paper, we proved that
the criterion provided in \cite{gl03} to show that a convolutional code is MDP
implies that the corresponding parity-check matrix is left prime if
$n-k$ divides $\delta$.

The same is true for the generator matrix provided that $k$ divides $\delta$, which implies that all MDP convolutional codes where $k$ divides $\delta$ are noncatastrophic.
 Moreover, when $n-k$ and $k$ do not divide $\delta$, we can add some technical assumption to still get the same result.
 Furthermore, we were able to remove the assumption of left primeness in the known characterization for MDP convolutional codes and show how it is possible to construct an MDP convolutional code with given degree via its parity-check matrix.
 
\section{Acknowledgment}
This work was partially supported by the Swiss National Science Foundation grant n. 188430 and by the German Research Foundation grant LI 3101/1-1. The authors acknowledge also the anonymous reviewer whose comments and suggestions helped improve this manuscript.

\bibliographystyle{abbrv}
\bibliography{references}
\end{document}